\documentclass[a4paper,10pt]{article}
\usepackage[utf8x]{inputenc}

\usepackage{graphicx}

\usepackage{amssymb,amsmath,amsthm} 
\usepackage[tight,center]{subfigure} 

 
\hyphenation{op-tical net-works semi-conduc-tor}

\usepackage{amssymb} 
 
\newtheorem{theorem}{Theorem}                                
\newtheorem{proposition}[theorem]{Proposition}

\newtheorem{definition}[theorem]{Definition}

\newcommand{\R}{\mathbb{R}} 
\newcommand{\Rp}{\mathbb{R}_+}

\def \mM {\mathcal S}

\def \mA {\mathcal A} 
\def \mV {\mathcal V} 
 
\def \tM {\widetilde{\mM}} 
\def \ts {\widetilde{S}}

\newcommand{\err}{\Delta} 

\begin{document}

\title{DASS: Detail Aware Sketch-Based\\ Surface Modeling}  

\author{Emilio Vital Brazil\\ 
\textbf{\small IMPA - Instituto Nacional de Matem{\'a}tica Pura e Aplicada}\\
{\small Rio de Janeiro - Brazil}\\ 
\textbf{\small University of Calgary}\\
{\small Calgary - Canada}} 
\date{}

\maketitle

\begin{abstract}
  We present a sketch-based modeling system suitable for detail editing, based on a multilevel representation for surfaces.
  The main advantage of this representation allowing for the control of local (details) and global changes of the model. 
  We used an adaptive mesh (4-8 mesh) and developed a label theory to construct a manifold structure, which is responsible for controlling local editing of the model.
  The overall shape and global modifications are defined by a variational implicit surface (Hermite RBF).
  Our system assembles the manifold structures to allow the user to add details without changing the overall shape, as well as edit the overall shape while repositioning details coherently.
\end{abstract}


\begin{figure}[b]
  \includegraphics[width=\linewidth]{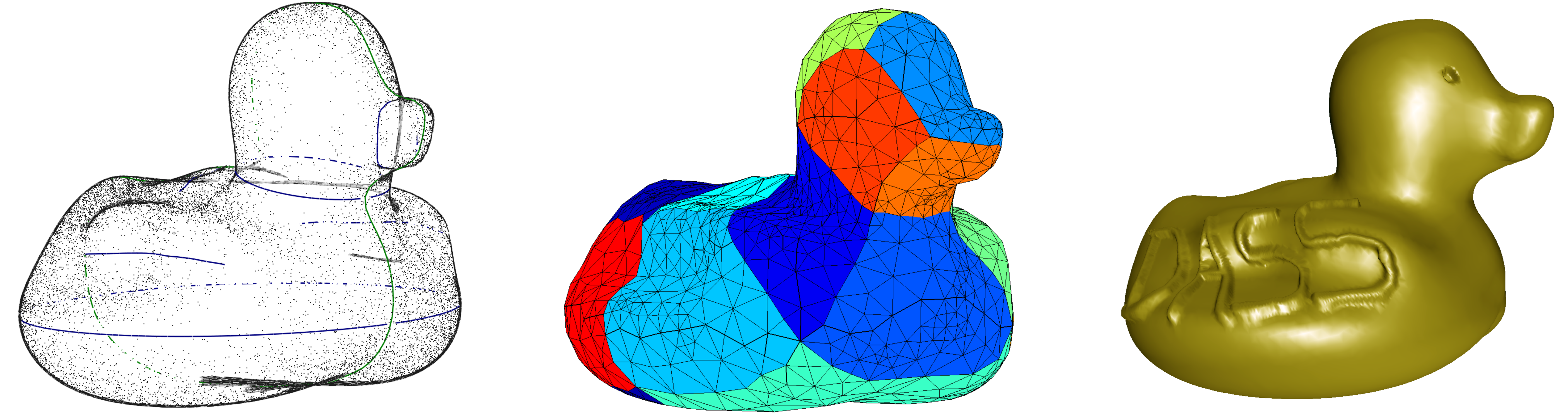} 
 \centering 
    \caption{Modeling a rubber duck: Left to right: implicit model, adaptive mesh, and final model augmented.} 
    \label{fig:composedRep:duck} 
\end{figure}
 
\section{Introduction}
  Sketch-based modeling (SBM) is a solid research area with many interesting problems on different domains such as computer vision, human-computer interaction and artificial intelligence~\cite{olsen09}. 
  However, these applications often do not differentiate the overall shape and local details, encumbering the control of the impact of fine-grained changes to the global shape, and vice versa.
  We thus present an SBM system which is based on a representation conceived with one main objective in mind: good control of global and local transformations using sketch-based tools.
  In Fig.~\ref{fig:composedRep:duck}, for instance, the rubber duck is augmented with local deformations, with nonetheless no changes to the overall shape.
 
   Our pipeline starts with a coarse shape represented by an implicit surface. 
   Specifically, we use Hermite Radial Basis Function (HRBF) due to its support for a great variety of SBM operators, as well as its good projection properties~\cite{vitalBrazil10:sbim}.
   After this, we construct a manifold structure for the implicit surface, which allows us to handle different parameters of the models (such as local augmentation, level of detail, color, and others). 
   We use an adaptive mesh to obtain good frequency control and maintain coherence between global and local transformations.

\section{Pipeline}\label{sec:composedRep:Pipeline} 

  Our pipeline divided in four fundamental parts. 
  We start with the coarse form defined by an implicit surface \cite{vitalBrazil10:sbim}; after that we build a base mesh (Section~\ref{sec:composedRep:BaseMesh}) that has the same topology and approximately the same geometry of the implicit surface. 
  The base mesh induces an atlas (Section~\ref{sec:composedRep:Atlas}) and provides a 4-8 base mesh  (Section~\ref{sec:composedRep:48mesh}).  
  The atlas is built using a partition of the set of faces of the mesh, and we use it to edit the model locally.  
  The 4-8 mesh has two roles in the pipeline: to build a map between surface and atlas, and to visualize the final surface. 
  After we have all parts, the 4-8 mesh is used to edit details that are saved in the atlas, and the atlas maps details onto the 4-8 mesh. 
  In Fig.~\ref{fig:composedRep:pipeline} we depict our pipeline.  
 
   \begin{figure*}[ht!] 
    \centering 
    \includegraphics[width=.55\linewidth]{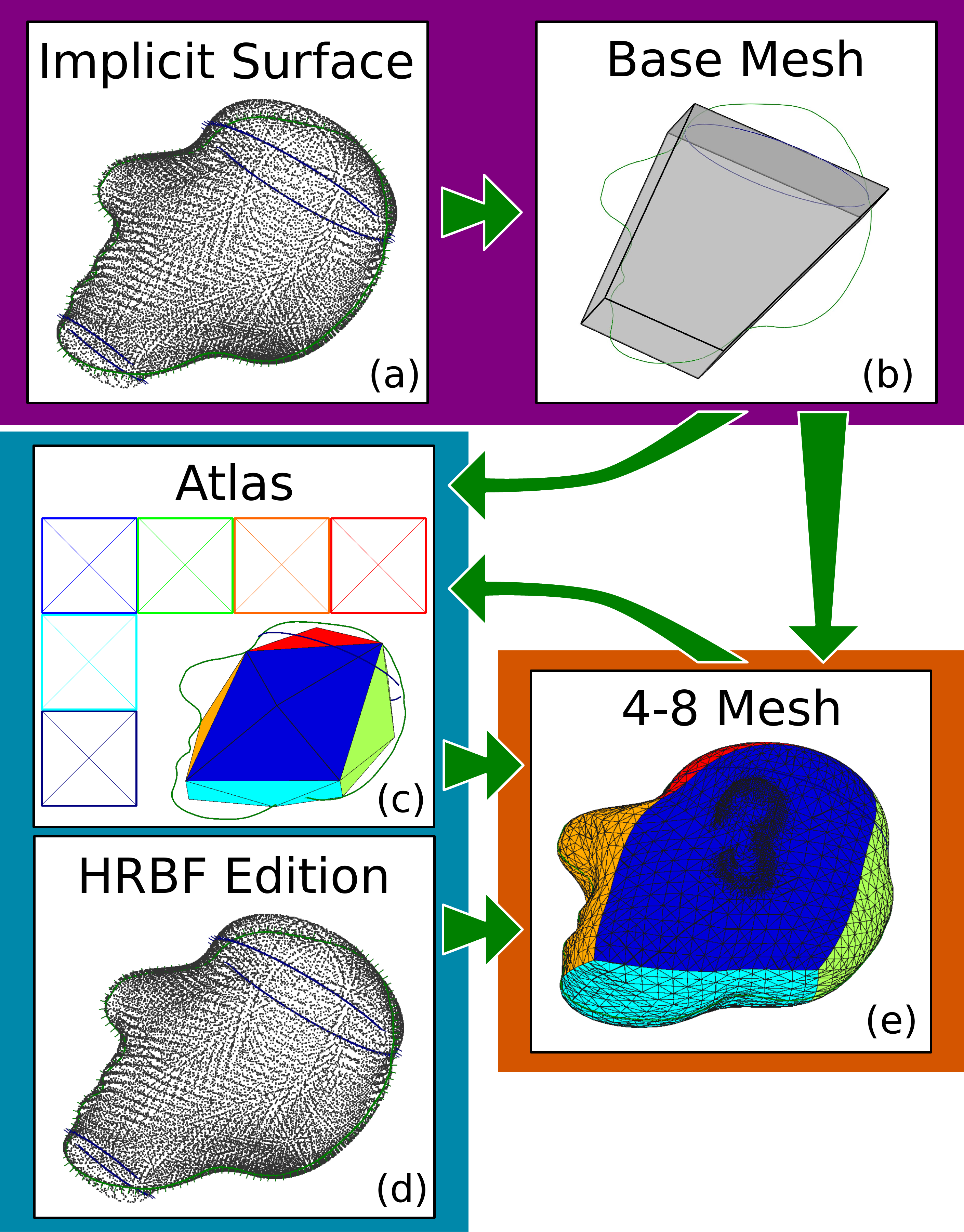} 
     \caption{The pipeline of our Sketch-based surface system. The arrows depict the  information flow.} 
     \label{fig:composedRep:pipeline} 
  \end{figure*} 
 
  The first step in the pipeline is obtaining a coarse shape of the final model (Fig.~\ref{fig:composedRep:pipeline}(a)). 
  Implicit models provide a compact, flexible, mathematically precise representation which are well suited to describe coarse shapes. 
  We use the same implementation described in \cite{vitalBrazil10:sbim},  
  in which the authors introduce a new representation for implicit surfaces and show how it can be used to support a collection of free-form modeling operations. 
  The implicit representation, variational Hermite Radial Basis Function (HRBF) used by Vital Brazil et al. 
  \cite{vitalBrazil10:sbim}, fits well with our pipeline due to its good projection properties, as well as for its simplicity and compactness. 
   
  After we obtain our implicit surface  $\mM$,  we create the manifold structure to represent our final model $S$. 
  To handle parameters, we use an atlas $\mA$ of $S$, i.e., $\mA = \{\Omega_i,\phi_i\}_{i=0}^k$ such that $\Omega_i\subset\R^2$, and $\phi_i:\Omega_i\rightarrow S$ are homeomorphisms~\cite{doCarmo76}. 
  However, we have an implicit surface without information about atlas. 
  There are many approaches to parametrize implicit surfaces, e.g. \cite{bloomenthal94, velho96, stander05}, but in order to find the correct topology of the model these approaches depend on user-specified parameters \cite{bloomenthal94,velho96}, or require differential properties of the surface \cite{stander05}.
  Apart from the topology issue, such methods neither guarantee the mesh quality nor have a direct way to build an atlas structure. 
  As a result, we opted to develop a method that is based on our problem and desired surface characteristics.
 
  First of all, we observe that there are two different scales of detail to be represented: the implicit surface (which is coarse) and the details (which are finer).
  The naive approach would be to use the finest scale of detail to define the mesh resolution.
  However, there are two issues associated with it: firstly, we do not know this finest scale a priori; and secondly, if the details appear in a small area of the model, memory and processing time will be wasted with a heavily refined mesh. 
  To avoid the issues describe in the former paragraph, we adopted a dynamic adaptive mesh, the semi-regular 4-8 mesh \cite{velho04}: it allows for the control of where the mesh is to be fine and coarse, by using a simple error function.
 
  Returning to the problem of parametrization of our implicit surface, now we wish for more than just a mesh: we need an adaptive mesh. 
  The framework presented by \cite{goes08} starts with a semi-regular 4-8 mesh and refines it to approximate surfaces using simple projection and error functions -- from now on we say 4-8 mesh in place of semi-regular 4-8 mesh. 
  To obtain a good approximation of the final surface, the 4-8-base-mesh should have the same topology and should approximate the geometry of the final surface. 
  Thereupon our parametrization problem was reduced to the problems of how to find a good 4-8 base mesh  (Section~\ref{sec:composedRep:BaseMesh}) and how to construct a good error function (Section~\ref{sec:composedRep:48mesh}).   
 
  The parametrization of the implicit surface is built in three parts: base mesh (Fig.~\ref{fig:composedRep:pipeline}(b)), atlas (Fig.~\ref{fig:composedRep:pipeline}(c)), and semi-regular 4-8 mesh (Fig.~\ref{fig:composedRep:pipeline}(e)). 
  In Section~\ref{sec:composedRep:BaseMesh} we present a base mesh with two roles in our system, inducing an atlas for the surface and creating a 4-8 mesh. 
  We describe a method in Section~\ref{sec:composedRep:Atlas} to create an atlas for adaptive meshes based on stellar operators.   
  In Section~\ref{sec:composedRep:48mesh} we discuss how build an error function for the 4-8 mesh that is sensitive to levels of detail. 
 

\section{Base Mesh}\label{sec:composedRep:BaseMesh} 
  The base mesh is the first step to parametrize our surface. This is a very important piece of our pipeline, because three important aspects of the final model depend on the base mesh: the topology of the final model, the atlas, and the 4-8 mesh quality. 
  Finding a good base mesh based only on information from the implicit model is a very hard problem in geometric modeling, but within an approach of a sketch-based modeling proposal, it is natural to make use of user input to obtain more information about the model and create the base-mesh.
 
  The user works with a simple unit of tessellation element (tesel)  which can have the topology of a cube or a torus. 
  This tesel is projected onto the drawing plane, which the user can edit to get a better approximation of the model (geometric and topological) by moving tesel vertices in the plane, dividing a tesel or changing its topology.  
  Afterwards, the system creates a tessellation in the space by moving each tesel vertex along the draw plane's normal direction. 
  In Fig.~\ref{fig:composedRep:base-mesh} we show the typical steps to create the base mesh: the user starts with a bounding box of the sketched lines, then divides tesels, moves vertices and changes tesel topology to build a better approximation of the intended shape.
  Our system defines vertex heights seeking along the draw plane's normal direction, for a root of the implicit surface.  
  Each face defines a chart and then it is triangulated to be the 4-8 base mesh. 
  \begin{figure}[h!] 
    \centering 
    \includegraphics[width=\linewidth]{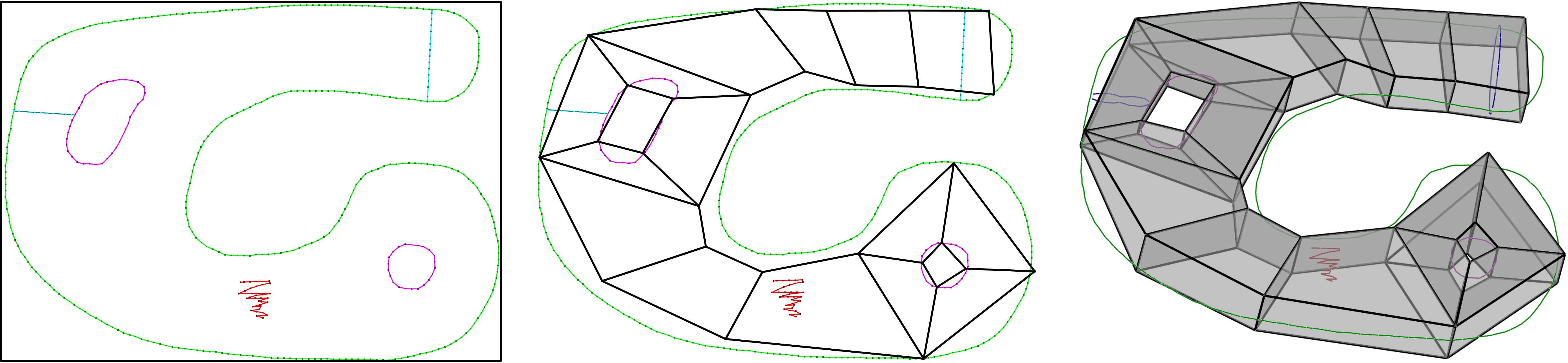} 
    \caption[Creating a base-mesh]{Creating a base-mesh for an implicit surface created using the construct lines described in Vital Brazil et al. \cite{vitalBrazil10:sbim}. Left to right, the first approximate, after the user correct the topology and better approximate the geometry, and the final result in $\R^3$.} 
    \label{fig:composedRep:base-mesh} 
  \end{figure} 

\section{Atlas}\label{sec:composedRep:Atlas} 
  The second step to obtain the manifold structure for our model is to construct an atlas, i.e., a collection of charts $c_i$ that are open sets $\Omega_i\subset\R^2$, and functions $\phi_i:\Omega_i\rightarrow S$ that are homeomorphisms~\cite{doCarmo76}. 
  Specifically for this application, each chart of $\mA$ is associated with a height-map, which is used to define a displacement along the normal direction. 
  In Section~\ref{sec:composedRep:48mesh} we use that height-map to define an error function that locates where the 4-8 mesh need be further refined. 
  We create two types of height-map layers: pre-loaded gray images and height-maps directly sketched on the surface. 
   
  In Fig.~\ref{fig:composedRep:atlas_final} we depict the steps to create an atlas for a 4-8 mesh $M$. 
  After the base mesh is obtained and each of its faces is triangulated, one refinement step is performed and then each base mesh face is associated with a chart (Fig.~\ref{fig:composedRep:atlas_final}(a)). 
  When the mesh is refined to better approximate the geometry, the atlas is updated and the user can draw a curve over the $M$ which is transported to the charts and then this curve creates/edits the height-maps (Fig.~\ref{fig:composedRep:atlas_final}(b)). 
  If the mesh resolution is not enough to represent the details, $M$ is refined; usually that happens when the user creates/edits the height-maps (Fig.~\ref{fig:composedRep:atlas_final}(c)). 
 
   Following we present the main aspects of a \emph{vertex-map} that we developed to allow us to create the atlas structure. 
     \begin{figure*}[ht!] 
     \centering 
     \includegraphics[width=1\linewidth]{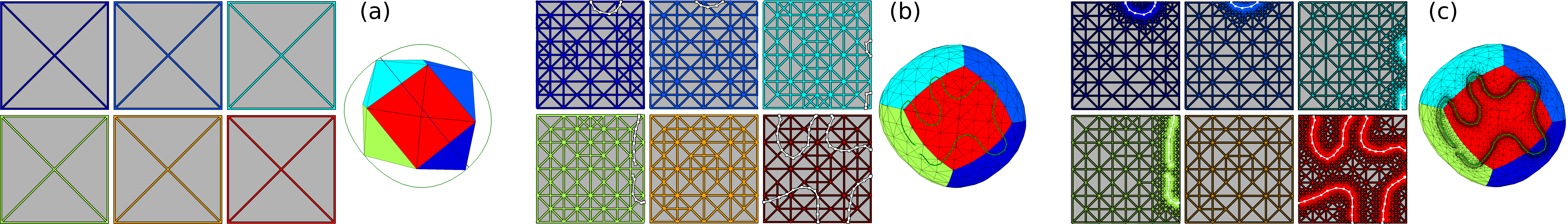} 
     \caption[Atlas steps]{Atlas steps: (a) The atlas is defined after one refinement step of $M$. (b) $M$ is refined and the user defines an augmentation sketching over the surface, and the sketches are transported to $\mA$ to built a height-map. (c) $M$ is refined to represent details of the final surface with height-map.} 
     \label{fig:composedRep:atlas_final} 
  \end{figure*}  
\subsection{Vertex-Map}
   In this section and in the next we construct the theoretical framework to build an atlas using a label function over the vertices of a mesh.
   We work with a general description of adaptive surfaces, based on \emph{stellar subdivision grammars}~\cite{velho03}.
   Our choice for parametric representation, the 4-8 mesh developed by Velho~\cite{velho04}, is an example of application of this grammar. 
   The atlas defined using vertices of the mesh has the following advantages: it is compact and simple; it naturally classifies edges as inner and boundary; and it is suitable for work with dynamic adaptive meshes. 
   
    As discussed before, we need an adaptive mesh to represent the high-frequency details. 
    However, when we do one refinement step in a mesh, new elements (vertices, edges, faces) are created, hence we need to update the atlas. 
    We propose a solution to construct and update an atlas using the natural structure of adaptive surfaces, using a simple label scheme for $4-8$ mesh. 
    Each vertex is labeled as inner vertex of a specific chart or as a boundary; that means if we have $N$ charts there are $N+1$ possible labels.
    The $4-8$ mesh uses stellar operators (Fig.~\ref{fig:composedRep:stellarOperator}), subsequently, we developed rules to update the atlas when these operators are used.
      
   First of all we formalize the concept of the \emph{regular labeled mesh}.
   After that we use these definitions to build an atlas with guarantees for adaptive surfaces that uses Stellar subdivision operators. 
 
  \begin{definition}\label{def:composedRep:klabeled}
    A mesh $M = (V,E,F)$ is \emph{$k$-labeled} if each vertex $v \in V$ has a label $L(v)\in\{0,1,2\dots,k\}$, i.e., if there is $L:V\rightarrow\{0,1,2\dots,k\}$. $L$ is called $k$-\emph{label function}. 
    If $L(v) = i \neq 0$, then $v$ is an \emph{inner-vertex} of the chart $c_i$; if $i=0$, $v$ is a \emph{boundary-vertex}.
  \end{definition}

  \begin{definition}\label{def:composedRep:regularklabeled}
    A face $f\in M$, is \emph{regular $k$-labeled} or \emph{$rk$-face} if there is $v\in f$ with $L(v) \neq 0$ and $\forall\; v_1,v_2 \in f$ such that $L(v_1)\neq0 \neq L(v_2) \Rightarrow L(v_1)=L(v_2)$. 
    A mesh is \emph{regular $k$-labeled} (or \emph{$rk$-mesh}) when all their faces are $rk$-faces.
    The function $L:V\rightarrow\{0,1,2\dots,k\}$ that produces a $rk$-mesh is called a \emph{regular $k$-label} or \emph{$rk$-label}.
  \end{definition}

  Observe that an edge in a regular $k$-labeled mesh has vertices with the same label or one of them has label $0$.
  If the edge has at least one vertex $v$ such that $L(v)=i\neq 0$; we call it an \emph{inner-edge} of the chart $c_i$ or $L(e)=i$; if it has the two vertices labeled as zero it is a \emph{boundary-edge} or $L(e)=0$. 

  \begin{proposition}\label{prop:composedRep:klabelPartition}
    A regular $k$-label function induces a partition on the set of faces. 
  \end{proposition}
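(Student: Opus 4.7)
The plan is to exhibit the natural candidate partition and then verify its two defining properties directly from Definition~\ref{def:composedRep:regularklabeled}. For each $i \in \{1,2,\dots,k\}$ set
\[
F_i = \{\, f \in F : \exists\, v \in f \text{ with } L(v) = i \,\},
\]
and declare the partition to be the collection $\{F_i : F_i \neq \emptyset,\ i=1,\dots,k\}$. Note that label $0$ is deliberately excluded, since boundary vertices should not single out a chart.

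First I would show that every face of the mesh lies in at least one $F_i$. Because $M$ is an $rk$-mesh, every face $f$ is an $rk$-face, and the definition of $rk$-face requires the existence of some $v \in f$ with $L(v) \neq 0$. Setting $i = L(v)$ places $f$ in $F_i$, so the sets $F_i$ cover $F$.

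Next I would establish pairwise disjointness. Suppose for contradiction that $f \in F_i \cap F_j$ with $i \neq j$ and both nonzero. Then $f$ contains vertices $v_1, v_2$ with $L(v_1) = i \neq 0$ and $L(v_2) = j \neq 0$, yet $L(v_1) \neq L(v_2)$, contradicting the implication $L(v_1)\neq 0 \neq L(v_2) \Rightarrow L(v_1)=L(v_2)$ in Definition~\ref{def:composedRep:regularklabeled}. Hence the nonempty $F_i$ are pairwise disjoint, and together with the covering property just established, they form a partition of $F$.

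There is no genuine obstacle here; the statement is essentially an unpacking of the $rk$-face definition. The only point worth stating carefully is that the regularity condition guarantees each face has a well-defined \emph{chart index} (the common nonzero label of its labeled vertices), even though a face may contain several vertices labeled $0$ in addition to those carrying the common label $i$. This well-definedness is precisely what prevents a face from being forced into two distinct $F_i$'s and is therefore the crux of the short argument above.
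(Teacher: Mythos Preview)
Your proof is correct and follows essentially the same approach as the paper: define $F_i$ as the set of faces containing a vertex labeled $i$, use the existence clause of Definition~\ref{def:composedRep:regularklabeled} to get coverage, and use the uniqueness clause to get disjointness. The only cosmetic difference is that you phrase disjointness as a contradiction and explicitly discard empty $F_i$'s, whereas the paper states both conclusions directly.
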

  
  \begin{proof}
    Let $M = (V,E,F)$ be a $rk$-mesh. Define the set $F_i = \{ f \in F |\; \exists\; v\in f \mbox{ such that }  L(v) = i \}$, $i\in\{1,2,\dots,k\}$. 
    By definition~\ref{def:composedRep:regularklabeled} every $f\in F$ has at least one $v$ with $L(v)\neq 0$ then:
      $$
	\bigcup_{i=1}^k F_i = F,
      $$
    and if there is more than one $v \in f$ such that $L(v)\neq 0$ then all such vertices will have the same value of $L$, i.e., the face belongs to only one $F_i$, so we conclude:
    $$
      F_i\cap F_j = \varnothing\;\, \hbox{ if }\;i\neq j.
    $$
  \end{proof}

  This proposition allows us to define a collection of charts over a $rk$-meshes.
  We say that a face $f$ is in the chart $c_i$ ($L(f) = i$) if there is at least one $v\in f$ such that $L(v) = i$.
  However for our application it is not enough to have a static map because our mesh is adaptive. 
  Hence we need rules to assign a $L$  value to the new vertices created by the refinement step of the mesh.  

  We develop our techniques based on the two works by Velho \cite{velho03,velho04} on stellar operators and dynamic adaptive meshes.
  We are working with 4-8 mesh which is a multi-resolution triangle-mesh for manifold surfaces. 
  The  4-8 mesh uses to refine and simplify the mesh the stellar operators that come from the theory of the stellar subdivision \cite{lickorish99}.
  Hence we study how to update the atlas after we apply one of these operators: \emph{edge split}, \emph{face split}, and their inverse \emph{edge weld} and \emph{face weld} (Fig.~\ref{fig:composedRep:stellarOperator}). 
  We use the concepts of sequence of meshes $(M_0,M_1,\dots,M_k)$ and \emph{level} of a mesh element exactly as presented by Velho~\cite{velho03}.
  
  \begin{figure}[thbp!]
     \centering
     \includegraphics[width=0.8\linewidth]{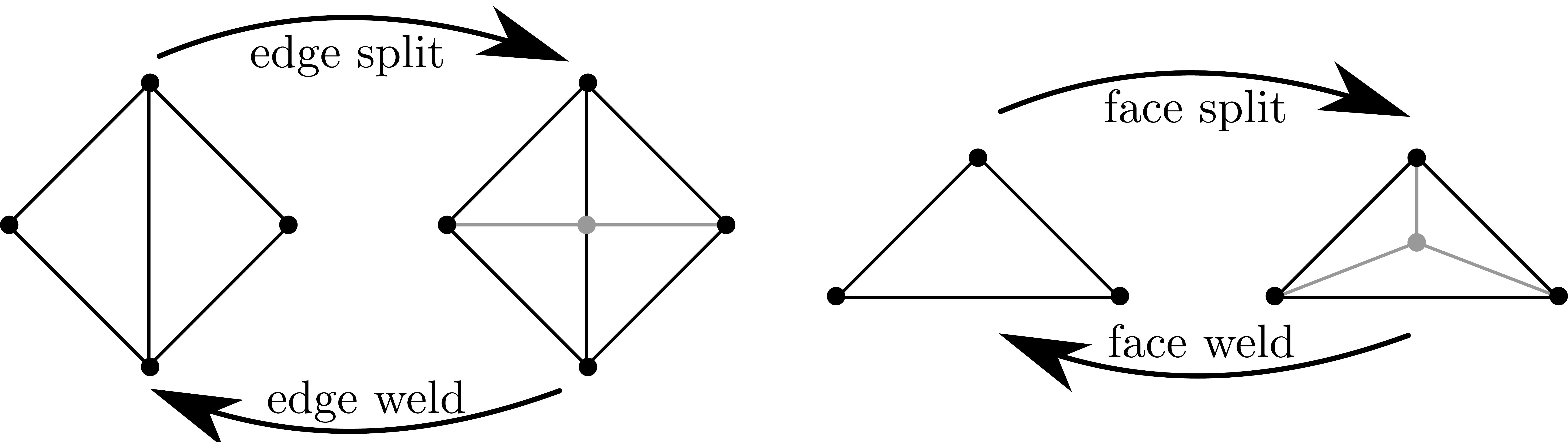}
     \caption[Stellar subdivision operators and inverses.]{Stellar subdivision operators and inverses.}
     \label{fig:composedRep:stellarOperator}
  \end{figure}

  When we apply one of these two stellar subdivision operators (split), it adds only one vertex.
  As a result, to update the atlas we only need rules to label the new vertex $v_n$ created for these two operators over a $rk$-mesh.
    \begin{itemize}
      \item {Face Split --} when the face $f$ is split we define:
	\begin{equation} \label{eq:label_rule_face}
	  L(v_n) = L(f)
	\end{equation} 
      \item {Edge Split --} when the edge $e$ is split we define:
	\begin{equation} \label{eq:label_rule_edge}
	   L(v_n) = L(e)
	\end{equation}
    \end{itemize}
 
  \begin{proposition}
     A stellar subdivision step using the previous rules on a $rk$-mesh~$M$ produces~$M'$ that is a $rk$-mesh too.
  \end{proposition}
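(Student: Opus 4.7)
The plan is to verify, operator by operator, that every face of the refined mesh $M'$ satisfies the two conditions in Definition~\ref{def:composedRep:regularklabeled}: (i) it contains at least one vertex with nonzero label, and (ii) any two of its vertices with nonzero labels share the same label. Since stellar splits act locally, faces of $M$ that are not incident to the split element carry over unchanged to $M'$ and remain $rk$-faces automatically, so only the faces created by the operator need checking.

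For the face split, let $f$ be the face on which the operator acts and write $L(f)=i$; by Proposition~\ref{prop:composedRep:klabelPartition} we have $i\neq 0$. The operator replaces $f$ by three sub-faces, each of which contains the new vertex $v_n$ together with two of the original vertices of $f$. By rule~(\ref{eq:label_rule_face}) we set $L(v_n)=i\neq 0$, so (i) is immediate. For (ii), any nonzero-labeled vertex of a sub-face is either $v_n$ (label $i$) or an original vertex of $f$ whose nonzero label must equal $i$ because $f$ was $rk$-regular. Hence each of the three new faces is an $rk$-face with label $i$.

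The edge split is the main case and splits into two subcases depending on $L(e)$. Let $e=\{v_1,v_2\}$ be the split edge and let $f_1,f_2$ be the two faces incident to $e$, with third vertices $w_1,w_2$ respectively; the operator replaces $f_j$ by the two sub-faces $\{v_1,v_n,w_j\}$ and $\{v_n,v_2,w_j\}$. If $L(e)=i\neq 0$, then $v_n$ receives label $i$ by rule~(\ref{eq:label_rule_edge}), so condition (i) holds for every new sub-face. Moreover, because $e$ has a vertex labeled $i$ and $f_j$ is $rk$-regular, we must have $L(f_j)=i$, so every nonzero label appearing among $v_1,v_2,w_j$ equals $i$; thus (ii) holds for each sub-face. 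If instead $L(e)=0$, then both $v_1$ and $v_2$ carry label $0$ and so does $v_n$. The required nonzero label in each sub-face must therefore come from $w_j$: since $f_j$ was $rk$-regular and its only candidates for a nonzero-labeled vertex were $v_1,v_2,w_j$, necessarily $L(w_j)=L(f_j)\neq 0$. Both sub-faces of $f_j$ contain $w_j$, so (i) holds, and (ii) is trivial because $w_j$ is the unique vertex with a nonzero label in each sub-face.

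The only conceptually delicate step is this last one: when $L(e)=0$ we must argue that the third vertex $w_j$ of each incident face is forced to be nonzero-labeled, which is exactly where regularity of $M$ is used. Everything else is a direct unpacking of Definitions~\ref{def:composedRep:klabeled} and \ref{def:composedRep:regularklabeled} together with the labeling rules (\ref{eq:label_rule_face}) and (\ref{eq:label_rule_edge}). Concluding, every face of $M'$ is an $rk$-face, so $M'$ is an $rk$-mesh.
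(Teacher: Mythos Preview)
Your proof is correct and follows essentially the same route as the paper's: you treat the face split and the edge split separately, and in the edge-split case you distinguish inner edges ($L(e)\neq 0$) from boundary edges ($L(e)=0$), using in the latter case exactly the observation that regularity of the old incident face forces its third vertex to carry the nonzero label. The only cosmetic difference is notation (your $w_j$ versus the paper's $v_w,v_e$) and that you invoke Proposition~\ref{prop:composedRep:klabelPartition} where the paper appeals directly to Definition~\ref{def:composedRep:regularklabeled}; either is fine.
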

  \begin{proof}
      Case we split a face $f$ we create 3 news faces ($f_1, f_2, f_3$), since $M$ is a $rk$-mesh the equation~\eqref{eq:label_rule_face} is well defined. Moreover $v_n \in f_1\cap f_2 \cap f_3$ then they have at least $v_n$ with $L(v_n) = i \neq 0$.
      Since $f$ is a $rk$-face all $v \in f$, $L(v)$ is $0$ or $i$, and for $j=\{1,2,3\}$, $v \in f_j \Leftrightarrow v = v_n$ or $v \in f$, we conclude if $v \in f_j \Rightarrow L(v) = 0$ or $L(v) = i$, i.e, $f_j$ is a $rk$-face.

      The  edge split create four new faces $f_j, j =1,2,3,4$.
      Note that the operator edge split subdivides two faces; lets name these faces \emph{west-face} ($f^w$) and \emph{east-face}  ($f^e$); and their opposite vertex as $v_e$ and $v_w$ respectively. i.e., $v_*\in f^*$ and $v_*\not \in e$.

      If $e$ is an inner-edge then for at least one of its vertices $L(v)=i \neq 0 $.
      Since $e$ is in $f^w$ and $f^e$ we have $L(f^w) = L(f^e) = i$ it implies that if $v\in f^w\cup f^e$ then $L(v) = i$ or $L(v) = 0$. 
      As a result when we split a inner-edge we have $L(v_n)=i$ and $v_n\in \bigcap_jf_j$ and $v\in f_j \Rightarrow v\in f^w\cup f^e$ or $v = v_n$, then $f_j$ is a $rk$-face.
      
      The fact of $e$ being a boundary-edge and $f^w$ and $f^e$ be $rk$-faces imply $L(v_e)\neq 0$ and $L(v_w)\neq 0$.
      Since $v_w \in f_j$ or $v_e \in f_j$ we have at most one $v\in f_j$ such that $L(v)\neq 0$ and $L(v_n)=0$, then we conclude that $f_j$ is $rk$-face.      
  \end{proof}
 
  The simplification step of an adaptive mesh is very important for our application, because when the user changes the sketches the mesh is dynamically updated that implies that the two steps (refinement and simplification) are done. 
  If we start with a $rk$-mesh (level 0) and perform $n$ refinement steps for any $m\leq n$ steps of simplification we yet have a $rk$-mesh.
  This fact is easy to see because when we do a refinement step we do not change the value of the vertices of the current level $j$, thus when we do the inverse operator to simplify only vertices of level $j+1$ are deleted so then the $L$ function over faces is well defined in level $j$. 

  To create a $rk$-mesh using our base-mesh, i.e., to create the $M_0$,  we label all vertices of the base-mesh as boundary ($L(v)=0$) and split each face, the new vertex added is labeled with a new value not $0$.
  After that each face of the base-mesh generates a new chart into the atlas, i.e., if the base mesh had $k$ faces the atlas has $k$ charts.
  In Fig.~\ref{fig:composedRep:M0}  we illustrate the process of creating a mesh $M_0$ that is a $r2$-mesh and three refinement steps.
  
  \begin{figure}[pbth!]
     \centering
     \includegraphics[width=\linewidth]{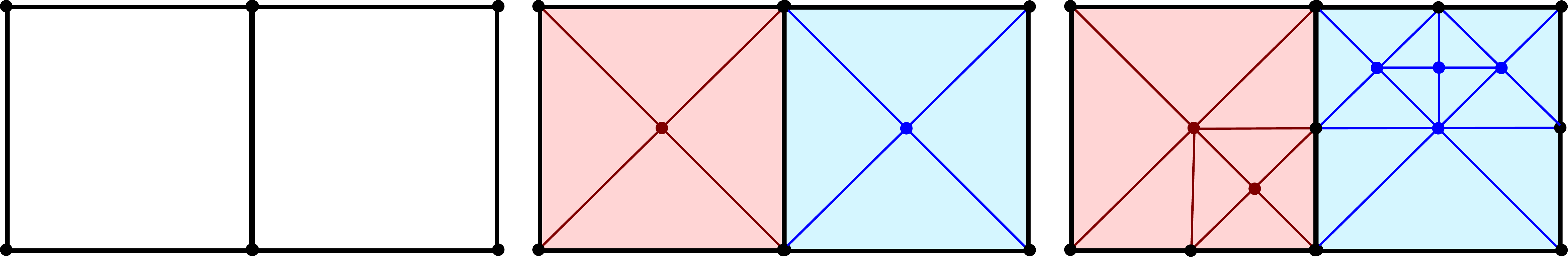}
     \caption[Creating a $rk$-mesh and refinements]{Creating a $r2$-mesh and refinements. Left to right: the base-mesh, $M_0$ which is $r2$-mesh, and after 3 refinement steps: $M_3$. Black elements are boundary ($L(\cdot) = 0$), blue elements are into chart $c_1$ ($L(\cdot) = 1$), and red elements are into chart $c_2$ ($L(\cdot) = 2$). }
     \label{fig:composedRep:M0}
  \end{figure} 

\subsection{Creating a Manifold Structure}
  Now we have a partition over the surface and we know how to refine and simplify the mesh respecting this partition.
  However we do not have yet all elements of an atlas for the surface $S$, is missing to define open set $\Omega_i \varsupsetneq c_i$ and homeomorphisms~$\phi_i$.  
  
  First of all let us to build $c_i$. 
  We define the functions $\phi_i:\Omega_i\big|_{c_i}\rightarrow S$ based on the structure of adaptive levels of the 4-8 mesh $M$.
  And, $c_i = [0,1]\times[0,1]$, then we set the four vertices of the base-mesh face $f_i = \{v_1, v_2, v_3, v_4\}$ to be the boundary of  $c_i$, i.e., the local coordinates in $\Omega_i$ of these vertices are: $v_1^i = (0,0)$, $v_2^i = (1,0)$, $v_3^i = (1,1)$, $v_4^i = (0,1)$.
  We overload the notation for chart, a chart of the atlas, $c_i\in\mA$, has two meanings, the first one is a set of faces, edges and vertices, used in previous section.
  The second one is the parametric space $[0,1]^2\subset\Omega_i$, more precisely when we say a point of $M$ belongs to a chart $c_i$ it means if we can write this points in $\Omega_i$ coordinates, then its coordinates are in $[0,1]^2$.
  At this point all vertices $v$ of $M$ have at least two geometrical information, its coordinates in $\R^3$ and, its coordinates in at least one $\Omega_i$.
  The notation $v^i$ is used to be clear when we are using $v$ in coordinates of $\Omega_i$, how to recover this information we will discuss later.

  Since $M$ is an adaptive mesh and now it has two geometrical aspects, its coordinates in $\R^3$ and in $\mA$, we need rules to update this information.
  When we split an edge $e = \{v_1,v_2\}$ we get its middle  point $v_m$ and project it on $S$ and if $e\in c_i$ then $v^i_m = (v^i_1+v^i_2)/2$.  
  Despite the rules to be simple they achieve goods results, in Section~\ref{sec:composedRep:48mesh} we discuss more about that rules. 
  Now we suppose that the approximation of the adaptive mesh is less than $\varepsilon>0$, i.e., for all points $p$ on $M$ imply $|p-\Pi_S(p)|\leq\varepsilon$ where $\Pi_S(p)$ is the projection of $p$ on the surface $S$.
  We are assuming that $\Pi_S$ is well defined for $\mV_{\varepsilon} = \bigcup_{p\in S}{\mathbf B}(p,\varepsilon)$, where $\mathbf B$ is the open ball with center $p$ and radius~$\varepsilon$. 
  That is true when $\mV_{\varepsilon}\subseteq \mV$, the tubular neighborhood of $S$ \cite{velho96b}.
  Particularly, the vertices of the base mesh start close to $S$ then their projections are well defined, therefore we replace their start position $v$ by $\Pi_S(v)$.
  We will also use the $\Pi_M(p)$, the projection of $p \in S$ on $M$, and again we are supposing that the mesh approximates well the surface.
  We say the chart $c_i$ is well defined after one refinement step (Fig.~\ref{fig:composedRep:M0}), thereupon if a point $p^i\in c_i$ then there is a face $f^i=\{v^i_{1}, v^i_{2} , v^i_{3}\}$  such that $p^i$ is a convex combination of its vertices.
  More precisely  $p^i = \sum_{k=1}^3\alpha_k v^i_{k}$ with $\alpha_k > 0$, $\sum_{k=1}^3\alpha_k=1$.
  So then we define:
  \begin{equation*}
    \phi_i(p^i) = \Pi_S\left( \sum_{k=1}^3\alpha_k \phi_i(v^i_{k})\right).
  \end{equation*}
  Specifically when we split an edge $e$, which belongs to $c_i$, $e^i=\{v^i_1,v^i_2\}$ we have:
  \begin{equation}\label{eq:composedRep:splitProj}
    \phi_i(v^i_n) = \Pi_S\left( \frac{\phi_i(v^i_1)+\phi_i(v^i_2)}{2}\right).
  \end{equation}
  
  \begin{proposition}\label{prop:composedRep:phiEqphi}
     For all $i,j$ and $v\in V$ such that $v \in c_i$ and $v \in c_j$ holds $\phi_i(v^i) = \phi_j(v^j)$. 
  \end{proposition}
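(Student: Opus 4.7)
The plan is to prove the identity by induction on the level of $v$ in the sequence of meshes $(M_0,M_1,\dots,M_n)$ produced by stellar subdivision from the $r k$-mesh $M_0$. First I would reduce to the non-trivial case: if $L(v)=i\neq 0$, then $v$ is an inner-vertex of $c_i$, and by Definition~\ref{def:composedRep:regularklabeled} every face containing $v$ must be in $F_i$. Hence $v$ belongs to no other chart and there is nothing to show. So the only situation requiring work is $L(v)=0$ (boundary-vertex), where $v$ can lie simultaneously in several $c_i$'s.

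For the base case, $v\in V(M_0)$. The text specifies that each base-mesh vertex has its position replaced by $\Pi_S(v)$, a single point in $\R^3$. By the definition of $\phi_i$ for base-mesh vertices, $\phi_i(v^i)$ equals this common $\R^3$ position irrespective of which chart coordinates we read it in, so $\phi_i(v^i)=\phi_j(v^j)$ trivially.

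For the inductive step, assume the claim for all vertices of level $\le n$ and let $v_n$ be a new vertex introduced at level $n+1$ by an edge split of $e=\{v_1,v_2\}$. If $e$ is an inner-edge, then $L(v_n)=L(e)\neq 0$ and the reduction at the start already handles $v_n$. Otherwise $e$ is a boundary-edge, so $L(v_n)=0$. The two faces $f^w,f^e$ adjacent to $e$ lie in charts $c_i$ and $c_j$ (possibly $i=j$), and both new charts inherit $v_n$ as a boundary vertex. In chart $c_i$ formula \eqref{eq:composedRep:splitProj} gives
\begin{equation*}
\phi_i(v_n^i)=\Pi_S\!\left(\tfrac{\phi_i(v_1^i)+\phi_i(v_2^i)}{2}\right),
\end{equation*}
and analogously in $c_j$. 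The inductive hypothesis, applied to $v_1$ and $v_2$ (which have level $\le n$ and lie in both $c_i$ and $c_j$), yields $\phi_i(v_1^i)=\phi_j(v_1^j)$ and $\phi_i(v_2^i)=\phi_j(v_2^j)$. Plugging these equalities into the two instances of \eqref{eq:composedRep:splitProj} shows the arguments of $\Pi_S$ coincide in $\R^3$, hence $\phi_i(v_n^i)=\phi_j(v_n^j)$. A general point of $c_i\cap c_j$ is handled by the convex-combination extension of $\phi_i$, since its coefficients are barycentric and depend only on the triangle geometry, so the same convex-combination argument propagates the vertex equality to arbitrary points.

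The main obstacle is the edge-split case on a boundary-edge: one must be sure that the \emph{midpoint} used by \eqref{eq:composedRep:splitProj} is taken in $\R^3$ via the images $\phi_i(v_k^i)$ rather than in the local parameter space, so that the two chart computations agree before $\Pi_S$ is applied. Once that observation is in place, the induction is essentially a bookkeeping exercise that cascades naturally along the refinement hierarchy and also survives simplification, since an edge-weld merely deletes a level-$(n{+}1)$ vertex without altering the level-$n$ data on which the remaining identities depend.
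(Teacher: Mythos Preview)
Your proof is correct and follows essentially the same route as the paper: induction on the refinement level, with the base case given by the common value $\Pi_S(v)$ on base-mesh vertices and the inductive step obtained by plugging the induction hypothesis for $v_1,v_2$ into formula~\eqref{eq:composedRep:splitProj} when a boundary-edge is split. Your explicit reduction to the case $L(v)=0$ and the remarks on general points and simplification go slightly beyond what the paper writes, but the core argument is identical.
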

  \begin{proof}
     We proof that proposition by induction for all levels of refinement of $M$.
     When we start the charts $c_i$ and $c_j$ all edges that are in their boundary come directly of the base mesh, if $v \in c_i$ and $v \in c_j$ then  $\phi_i(v^i) = \Pi_S(v) = \phi_j(v^j)$, by construction.
     Now suppose the Proposition~\ref{prop:composedRep:phiEqphi} is true for all $v$ with level less or equal the current level. 
     Observe that by~\eqref{eq:label_rule_face} and~\eqref{eq:label_rule_edge} a boundary-vertex $v$ is created only when a boundary-edge is split, consequently by~\eqref{eq:composedRep:splitProj} and induction hypothesis holds:
     \begin{equation*}
       \begin{split}
        \phi_i(v^i) =& \Pi_S\left( \frac{\phi_i(v^i_1)+\phi_i(v^i_2)}{2}\right)\\ 
                    =& \Pi_S\left( \frac{\phi_j(v^j_1)+\phi_j(v^j_2)}{2}\right) = \phi_j(v^j).
       \end{split}
       \end{equation*}
  \end{proof}

  To define the inverse of $\phi_i$ we use the projection $\Pi_M$, the idea is to project the point on the mesh, identify which face it is projected and use the barycentric coordinates to define it coordinates in $\Omega_i$.
  More precisely, let $\Pi_M(p) = \sum_{k=1}^3\alpha_kv_k$, with $\alpha_k > 0$, $\sum_{k=1}^3\alpha_k=1$ and $f = \{v_1,v_2,v_3\}$ where $\nobreak{L(f)=i}$, then we have:
  \begin{equation} \label{eq:composedRep:phiInv}
    \phi^{-1}_i(p) = \sum_{k=1}^3\alpha_kv^i_j.
  \end{equation}
  Since we are supposing that $M$ is close to $S$ we have $\phi$ and $\phi^{-1}$ well defined, i.e., $\phi_i\circ\phi^{-1}_i(p)=p$ and $\phi^{-1}_i\circ\phi_i(p^i)=p^i$ for all $p\in S\cap \phi_i(c_i)$ and $p^i\in c_i$.
  
  To build and to glue the height-maps consistently we need to know how to write inner-points of $c_i$ in $\Omega_j$ coordinates when $c_i$ and $c_j$ are neighbors, i.e., we need be able to write a point $p^i \in c_i$ in $\Omega_j$ coordinates when $c_i$ and $c_j$ have common vertices.
  Since we started our chart with quadrangle domains we use the approach develop by Stam~\cite{stam03} to  convert $p^i$ to $p^j$.
  The author recovers the relative affine coordinates of $\Omega_i$ to $\Omega_j$, he achieves that by matching commons edges of $c_i$ and $c_j$. 
  
\subsection{Sketching over the Surface} 
  To allow the user add an augmentation we freeze the camera and she or he draws polygonal curves over the surface.
  These strokes are transported to atlas $\mA$ where they are used to define height-maps, we name these projected curves as \emph{height-curves}.
  To transport the curves to $\mA$ we use the Equation~\eqref{eq:composedRep:phiInv} in their points, i.e. we project the curve points directly on $M$, identifying which face they was project, and use their barycentric coordinates to transport them to the correspondent $c_i$.
  If the line segment $pq$ starts in the chart $c_i$ and ends in the chart $c_j$ then to guarantee  continuity we write  $p^iq^i$ and find the point of this segment that is in the boundary of $c^i$ and add this point to the height-curve.
  We do the same thing to the segment $p^jq^j$. 
  In Fig.~\ref{fig:composedRep:atlas_lines2} we show the result of this process in two charts.

  \begin{figure}[ht]
    \centering
    \includegraphics[width=\linewidth]{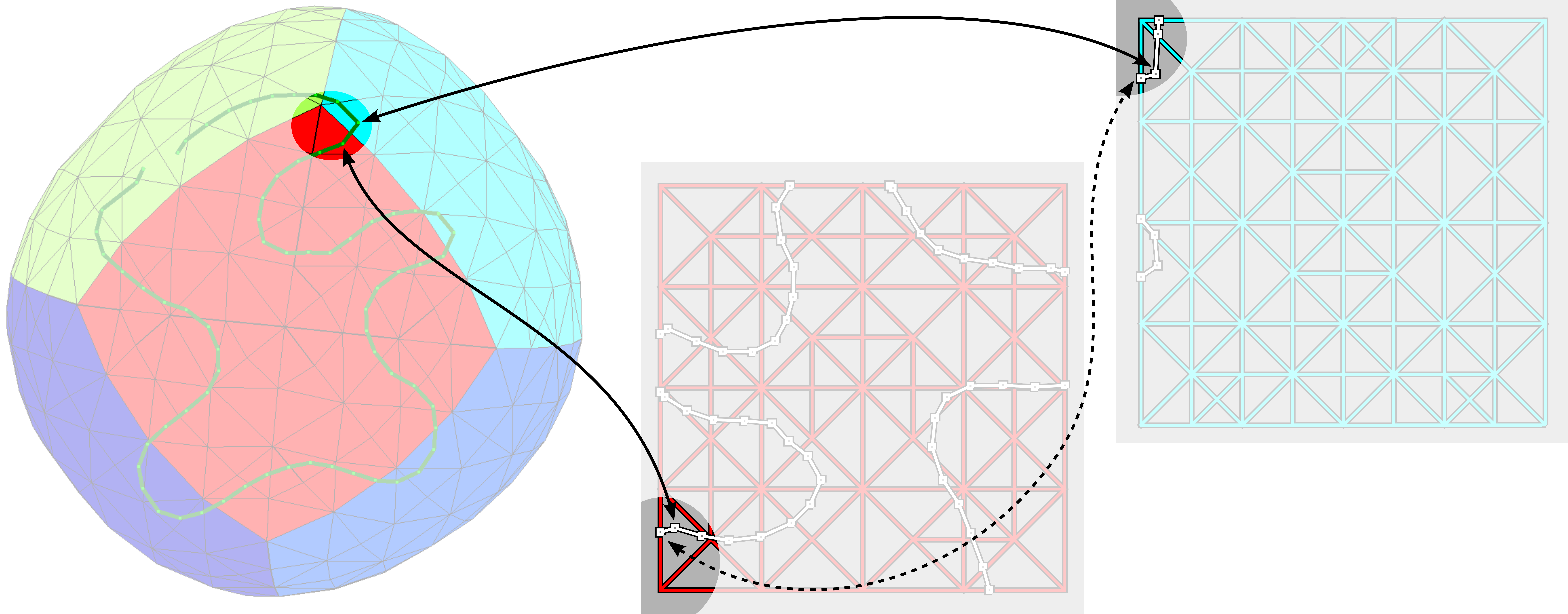}
    \caption[Sketch over surface and transported curve to $\mA$]{Sketch over surface and transported curve to $\mA$. The two solid arrows show points on $M$ that are transported to $\mA$, the dashed arrow shows points that are created in the chart boundaries to guarantee hight-curve continuity.}
    \label{fig:composedRep:atlas_lines2}
  \end{figure} 

  To define the height map we use the distance of a point in $c_i$ to the height-curve.
  We create this field using the approach presented by Frisken~\cite{frisken08}.
  In this work she uses a vector distance field which represent the distance at any point as a vector value.
  For simplicity we define a height $h$ for all points of one same height-curve $\lambda$, and we tested as height function $f_\lambda(p^i)=h\exp(-25d^4/4r^4)$ where $d$ is the distance of $p^i$ to $\lambda$ and $r$ is the radius of influence of $\lambda$.

  After all, we have a height-map~$h^i_u$ for each chart  $c_i$ that was sketched by the user.
  We can compose this height-map with another, such as a gray depth image $h^i_d$, for example to obtain a final height at $p\in M$ adding the heights, $h_p = h^i_d(p^i) + h^i_u(p^i)$.
  Then we have $D(p) = h_p N_p$ where $N_p$ is it normal at $p$. 
  Thus we complete the formulation of the final surface: $\tM = \mM + D(\mM)$ or specifically for all $p\in M$ we have $\tilde{p} = p+ h_pN_p$.

\section{4-8 Mesh}\label{sec:composedRep:48mesh}
  The 4-8 mesh $M$ has two main roles in our system, the first one was described in the last section, we use the mesh to transport points to the atlas.
  Besides that we use $M$ to visualize the approximate final surface.
  We use the library developed by Velho~\cite{velho04}.
  We start the 4-8 mesh with the base mesh triangulated and then we apply one refinement step to define the atlas.
  In addition we need provide a function that samples a edge returning a new vertex, and two error functions.
  One to classify the edges for the refinement step and one to classify the vertices for simplification step.
  
  To define a new vertex we adopt the naive approach that takes the middle point of an edge and project it on surface, i.e., to split a edge $e = \{v_1,v_2\}$, we create a new vertex $v_n = \Pi_S\left((v_1+v_2)/2\right)$ and as described in the last section if $v_n \in c_i$ it saves its local coordinates too.
  In spite of this approach being simple it achieves good results for our application. 
  
  To complete the adaptive process of 4-8 mesh we need to choose which edges will be split, in order to refine the mesh and which vertices will be removed to simplify the mesh.
  In our implementation, this classification is done providing two error functions and one parameter.
  To define our error function we need to describe how we measure the distance between a point and the surface.
  First, observe that $\Pi_S$ is the projection on~$S\neq\ts$, thereupon the $\Pi_S$ is not enough to define the distance.
  To project a point $p$ on $\ts$, first we project $p$ on $S$ then we apply $D$, more precisely, 
  \begin{equation}\label{eq:composedRep:tsProjection}
    \Pi_{\ts}(p)=\Pi_S(p)\oplus D(\Pi_S(p)),     
  \end{equation}
  thus the distance between $p$ to $\ts$ is the usual 
  \begin{equation}\label{eq:composedRep:tsdistance}
    d_{\ts}(p) = |p-\Pi_{\ts}(p)|.    
  \end{equation}

  Now we can determine the error functions using the stochastic approach presented by \cite{goes08}.
  Let us define the error on faces, we randomly take $n$ points on the face and calculate the distance of the point to the surface then we sum all distance and divide by $n$.
  Therefore the error function for edge is the error average  of its faces, and the vertex error function is the error average of its star neighborhood.
  To control the mesh adaptation we define an error threshold $\varepsilon>0$, if the edge error is above that threshold the edge is refined.
  Observe that, the  $\varepsilon$  controls the size of our final mesh.
  If the $\varepsilon$ is  small we have a good approximation of the surface though the mesh will have too many vertices which will be computationally expensive to execute simple operations such as project a line (Fig.~\ref{fig:composedRep:48Error}(c)).
  On the other hand, if the $\varepsilon$ is big the mesh will be  computationally cheap however the mesh will not represent the final surface details (Fig.~\ref{fig:composedRep:48Error}(b)).

  \begin{figure}[ht!]
    \centering
    \subfigure[Detail sketch, $\varepsilon =  10^{-3}$.]{\includegraphics[width=.44\linewidth]{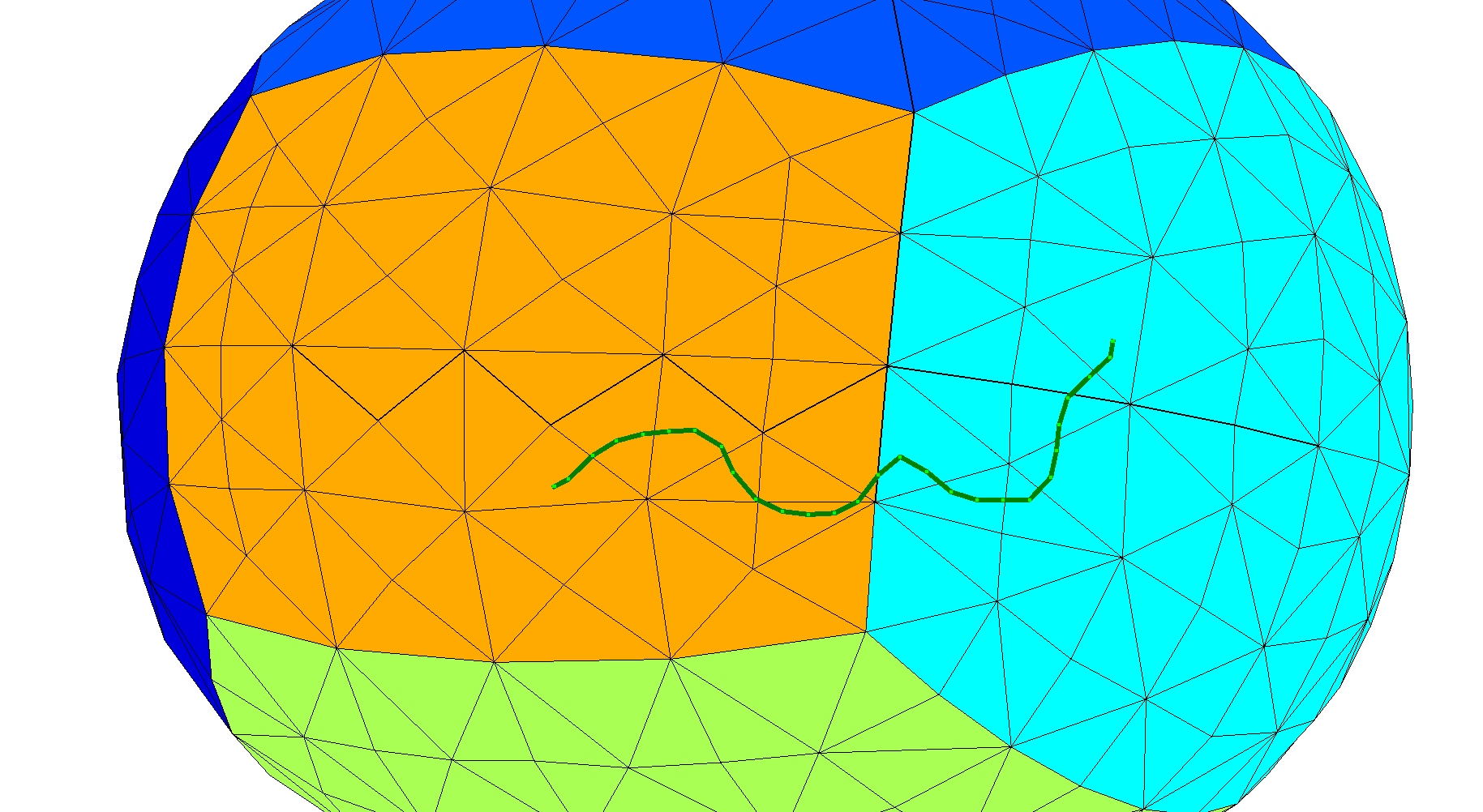}}\hspace{.05\textwidth}
    \subfigure[Simple error function, $\varepsilon = 10^{-3}$.]{\includegraphics[width=.44\linewidth]{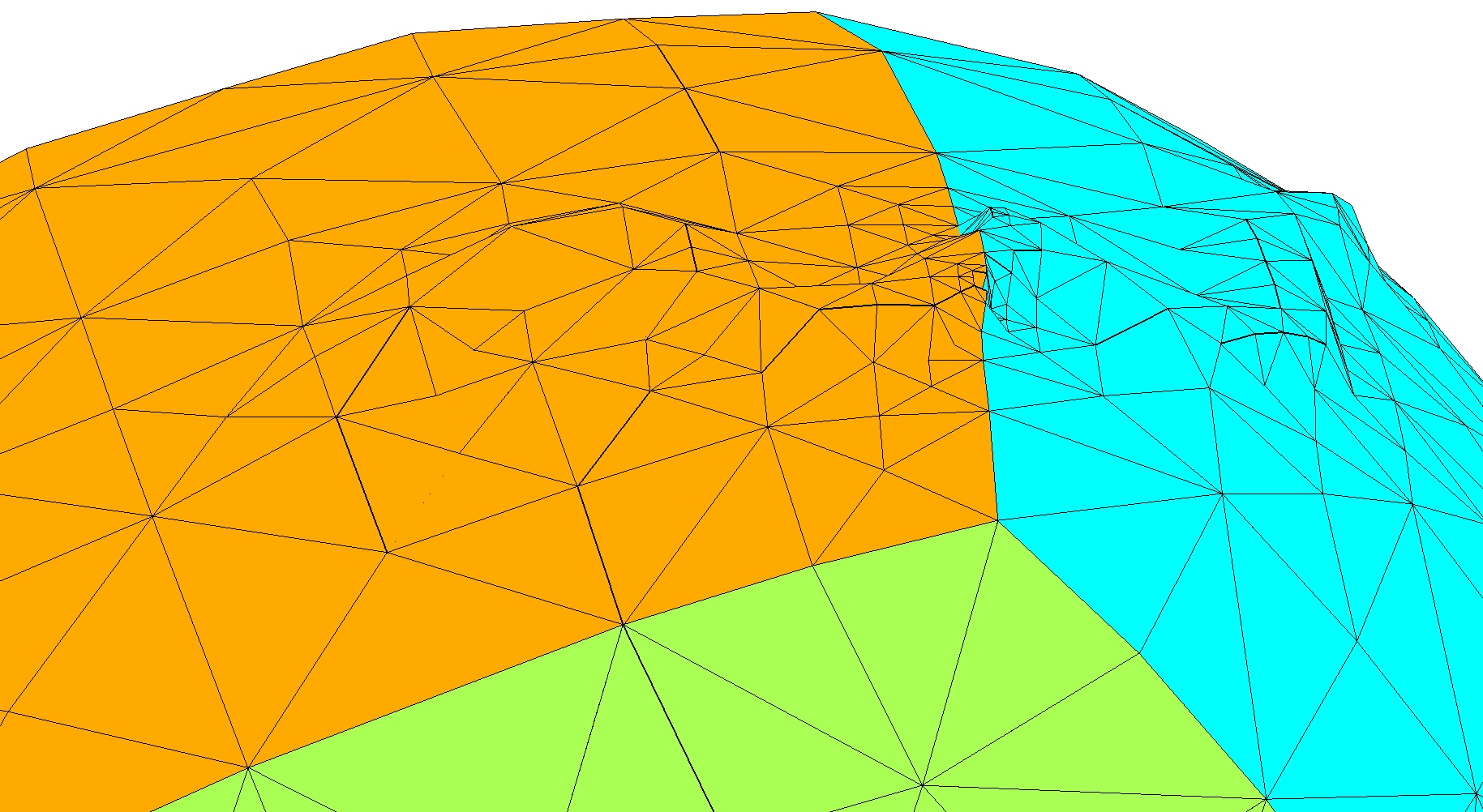}}
    \subfigure[Simple error function, $\varepsilon  = 10^{-4}$.]{\includegraphics[width=.44\linewidth]{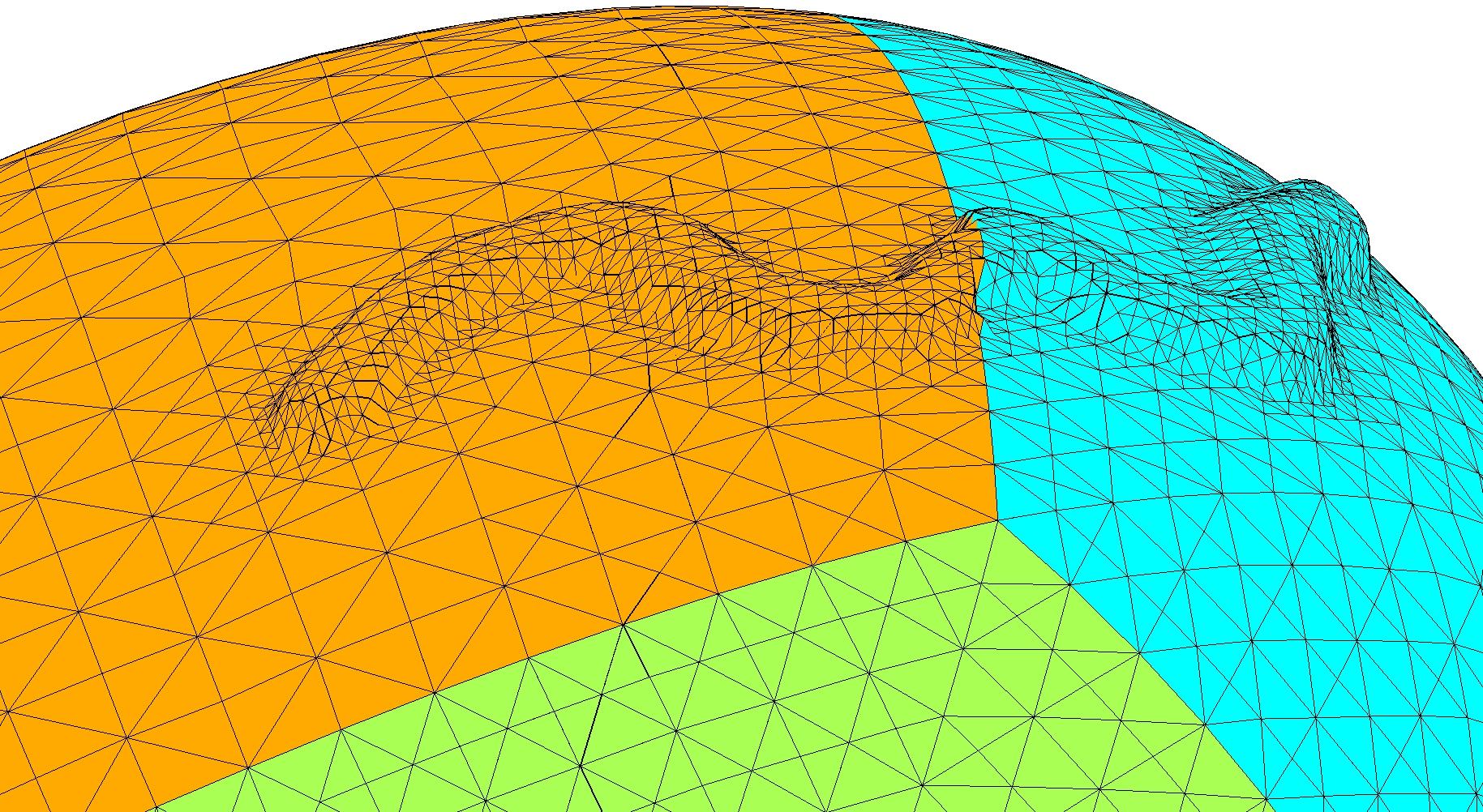}}\hspace{.05\textwidth}
    \subfigure[Local  error function, $\varepsilon =  10^{-3}$.]{\includegraphics[width=.44\linewidth]{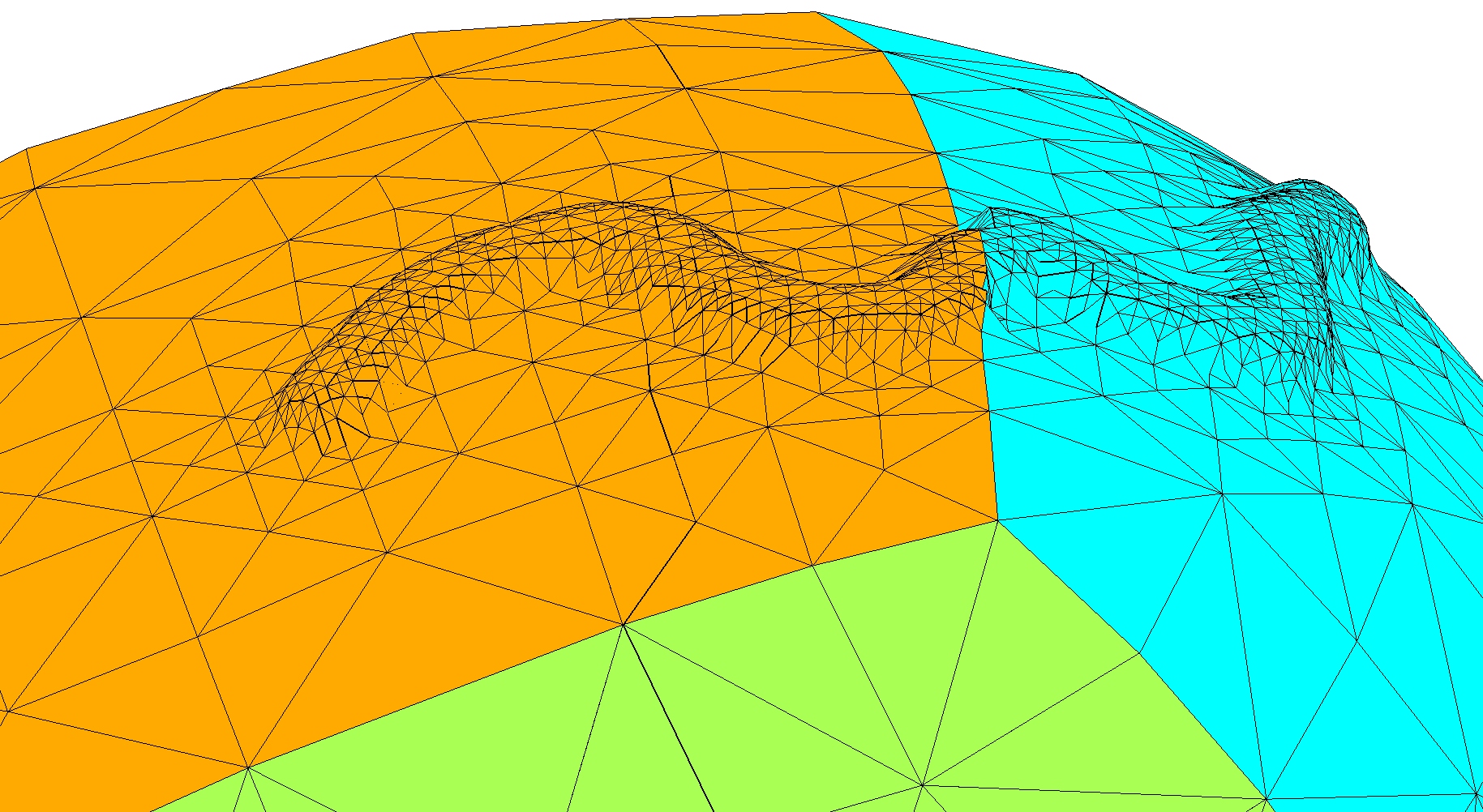}}
   \caption[Local error control]{Local error control.}
    \label{fig:composedRep:48Error}
  \end{figure} 

  It is natural to have an approximation for $S$ geometry coarser than for $\ts$ geometry because we are assuming that $S$ is only the coarse information in contrast to $\ts$ that has also details (Fig.~\ref{fig:composedRep:48Error}(a) and~(c)).
  However, since generally details are restricted to small surface areas if we use $\ts$ to choose~$\varepsilon$ we will have a expensive mesh that do not bring real benefits.
  Since our application works with two different levels of details so then is natural use that structure to define the error functions that depend on the detail level of a surface point.
  In our representation the details are encoded in $D$ however not all parameters of $\mM$ will influence the final mesh, thus we introduce the notation $D_g$ for these parameters that affect the refinement.  
  As a result we define our level of detail at a point $p$ as
  \begin{equation}\label{eq:composedRep:errorP}
   E(p) = \eta( D_g(p)),
  \end{equation}
  where $\eta:\R^d\rightarrow\Rp$. 
  We implement that using the height maps since they are our details over the surface, specifically the Equation~\eqref{eq:composedRep:errorP} is rewritten as $E(p) = \max\{2|\nabla h_p|,1\}$, where $\nabla$ is the gradient.

  Now we have all elements to define an error function that is not blind to level of detail at a point over the surface.
  We define the local error function using  Equation~\eqref{eq:composedRep:tsdistance} and~\eqref{eq:composedRep:errorP} so then we have $\err(p) = d_{\ts}(p)E(p)$.
  Now we apply this new definition in the face error calculation and as result we reformulate the edge error and the vertex error functions.
  In Fig.~\ref{fig:composedRep:48Error} we can observe the difference between to use the simple error function and to use the local error function. The mesh in Fig.~\ref{fig:composedRep:48Error}(b) has 460 vertices however we lost the details of the final surface, if we decrease the~$\varepsilon$ (Fig.~\ref{fig:composedRep:48Error}(c)) we reveal the details though the mesh grows ten times with 4.8k vertices, when we use the local error function (Fig.~\ref{fig:composedRep:48Error}(d)) we reveal the detail and the mesh size does not grow too much, 1.3k vertices.

   \begin{figure*}[t] 
    \centering 
    \includegraphics[width=1 \textwidth]{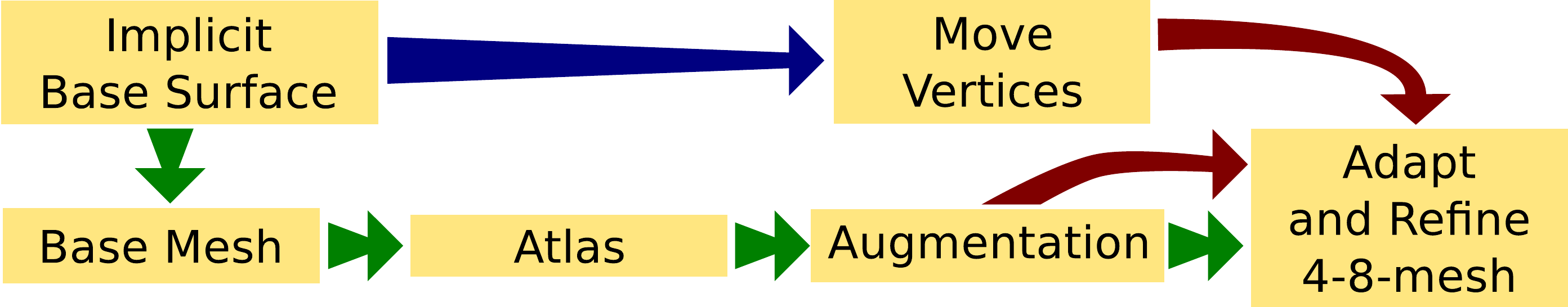} 
    \caption[Overview of system work-flow.]{Overview of system work-flows: green arrows are the startup and topological change step sequence, blue arrow are stepped when the implicit surface is edited, and the red arrow is done when the mesh resolution changes.} 
    \label{fig:composedRep:Overview} 
  \end{figure*} 

\section{Work-flow and Results}\label{sec:composedRep:workflow} 
  In this section we present all pieces of our pipeline working together. 
  Our work-flows are based on the framework presented by \cite{goes08} to adapt dynamic meshes. 
  There are three different work-flows in the pipeline: (1) the user starts the modeling system with a blank page, or by making changes to the actual model topology, (2) the geometry of the implicit surface changes, and (3) the mesh  resolution is recalculated (which usually happens when the height-maps are changed). 
  The overview of the work-flow is depicted in Fig.~\ref{fig:composedRep:Overview}. 
 
   The user starts the model with construct lines, creating samples that define an implicit surface (Fig.~\ref{fig:composedRep:headResult}(a)) using the system described in~\cite{vitalBrazil10:sbim}. 
   After that, the user creates a planar version of the base mesh that approximates the geometry and has the same topology of the final model (Fig.~\ref{fig:composedRep:headResult}(b)). 
   Thus, the base mesh is transported to space (Fig.~\ref{fig:composedRep:headResult}(c)). 
   Now the base mesh is used to create an atlas structure (Fig.~\ref{fig:composedRep:headResult}(d)) for a 4-8 mesh. 
   This mesh is adapted and refined creating the first approximation of the final model (Fig.~\ref{fig:composedRep:headResult}(e)). 
   The steps described up to now are the common steps for all modeling sessions. 
   They are represented by the green arrows in Fig.~\ref{fig:composedRep:Overview}.  
   In addition, these steps are illustrated in Fig.s~\ref{fig:composedRep:spaceCarResult}(a) and~(b), \ref{fig:composedRep:terrainResult}(a), and~\ref{fig:composedRep:visgrafResult}(a). 
   Note that when we change the topology we also need to change the base mesh, restarting the process, e.g., in Fig.~\ref{fig:composedRep:spaceCarResult}(a) and~(b). 
   If there is a predefined height-map, the model reaches the end of this stage with one or more layers of detail.  
   For example, in Fig.~\ref{fig:composedRep:visgrafResult}(a) we start the model with a height-map encoded as a gray image. 
    \begin{figure*}[t] 
    \centering 
    \includegraphics[width=1\textwidth]{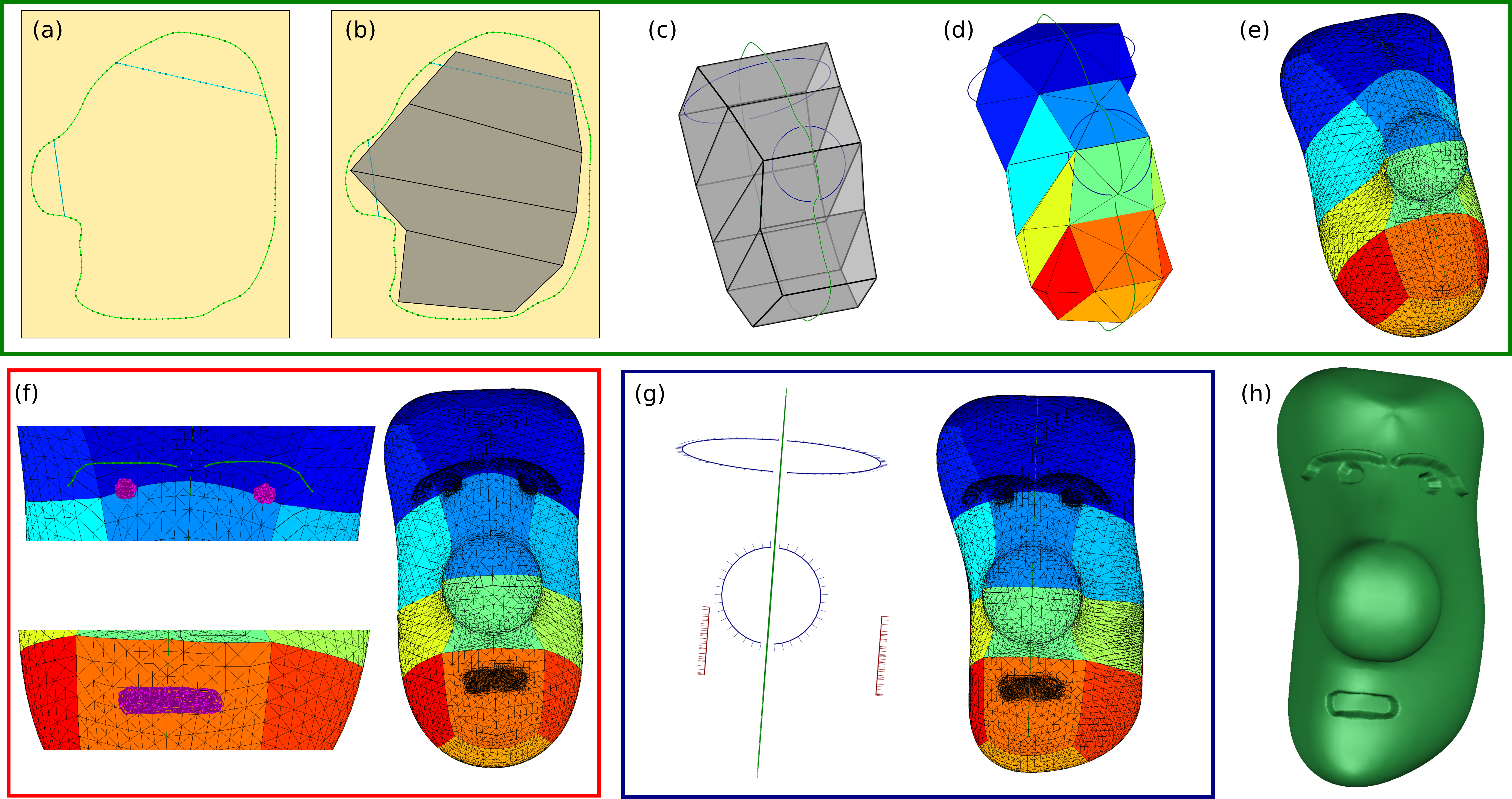} 
    \caption[Steps to model a head.]{Steps to model a head.} 
    \label{fig:composedRep:headResult} 
  \end{figure*} 
 
  \begin{figure*}[t] 
    \centering 
    \includegraphics[width=1\textwidth]{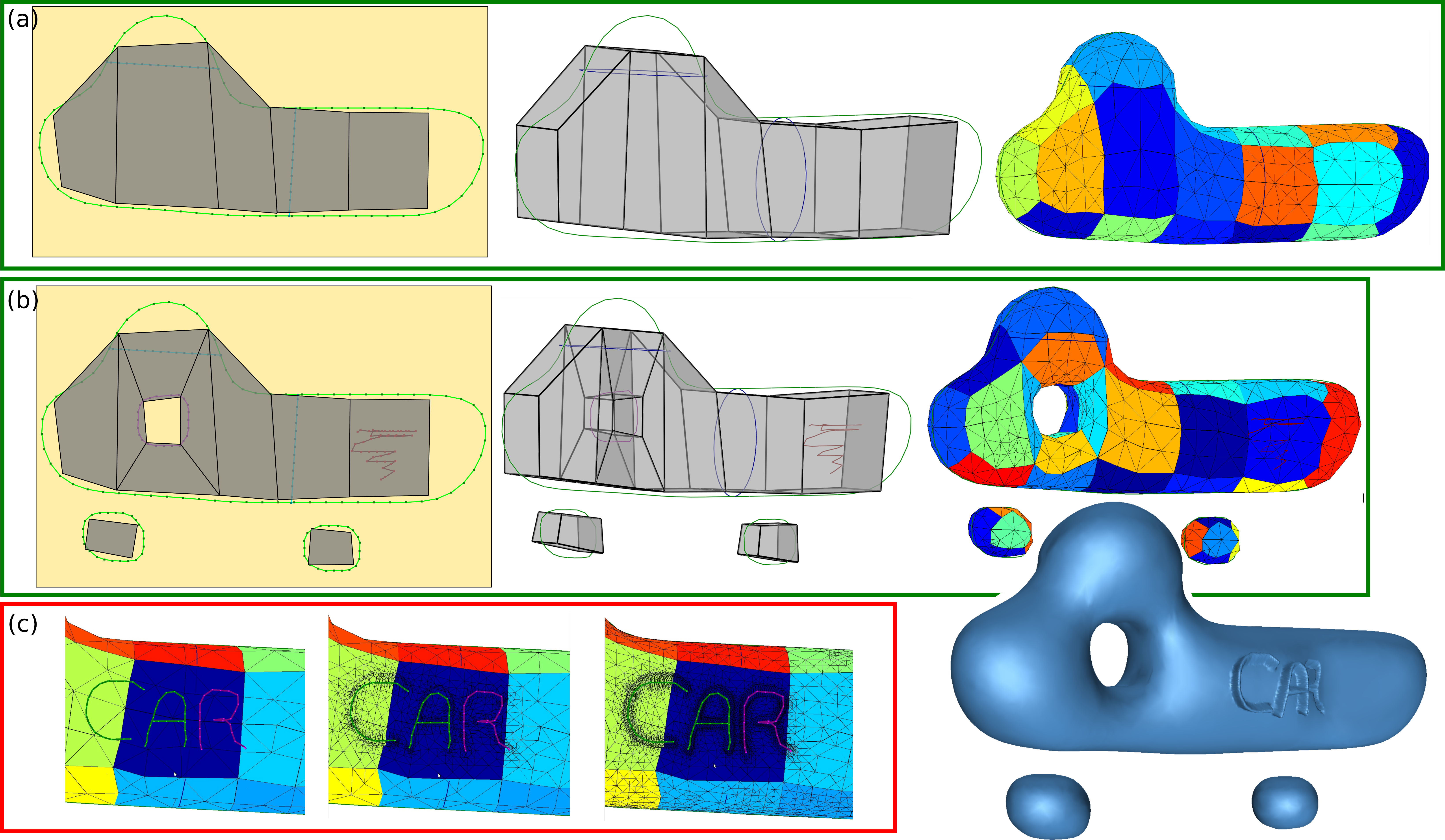} 
    \caption[Steps to model a space car.]{Steps to model a space car.} 
    \label{fig:composedRep:spaceCarResult} 
  \end{figure*} 
 
  After the first approximation for the final surface, the user can edit the implicit surface and create/edit a height-map. 
  When details are added on the surface, in almost all cases it is implied that the resolution of the mesh is not fine enough to represent the new augmentation. In this case we must adapt and refine the mesh. 
  In Fig.s~\ref{fig:composedRep:headResult}(f),~\ref{fig:composedRep:spaceCarResult}(c),~\ref{fig:composedRep:terrainResult}(b), and~\ref{fig:composedRep:visgrafResult}(b): the user sketches a height-map over the surface  and the mesh is refined to represent the geometry of the augmentation correctly. 
  The user can change the implicit surface at any stage, and if the topology is still the same, then the system allows vertices to be moved without adaptation and refinement (in order to obtain a fast approximation).
  Since detail are codified separately, they are moved consistently when implicit surfaces are edited.
  We illustrate that in Fig.s~\ref{fig:composedRep:headResult}(g)~\ref{fig:composedRep:visgrafResult}(c), and~\ref{fig:composedRep:terrainResult}(c),~(e) and~(f). 
  Specifically in  Fig.~\ref{fig:composedRep:terrainResult}(e) and (f) we can compare good final results preserving the details despite the significant changes of the implicit surface.  
  Sometimes, when only the implicit surface is changed, moving the vertices alone is not enough to reach the desired quality. in these cases, the user can adapt and refine the mesh decreasing the error threshold, as shown in Fig.~\ref{fig:composedRep:terrainResult}(d). Here, the user initializes $\varepsilon=10^{-3}$, and after some modeling steps a new threshold of $10^{-4}$ is chosen. 
 
  The modeling of each of the four models presented in this section took approximately 10 minutes, from the blank page stage up to the final mesh generation.
  All the results were generated on an 2.66 GHz Intel Xeon W3520, 12 gigabyte of RAM and OpenGL/nVIDIA GForce GTX 470 graphics. 
  The most expensive step was creating the implicit surface, followed by the creation of the base mesh; on the other hand, processing of the augmentation and minor adjustments in the implicit surface had a minor impact on performance. 
  The bottle neck is the mesh update; if the mesh has too many vertices (around 10k), one refinement step after an augmentation takes about 10 seconds. 
  The final models of space car, terrain, head, and party balloon have 10k, 11k, 11k and 13k vertices respectively.

  \begin{figure*}[ht!] 
    \centering 
    \includegraphics[width=1\textwidth]{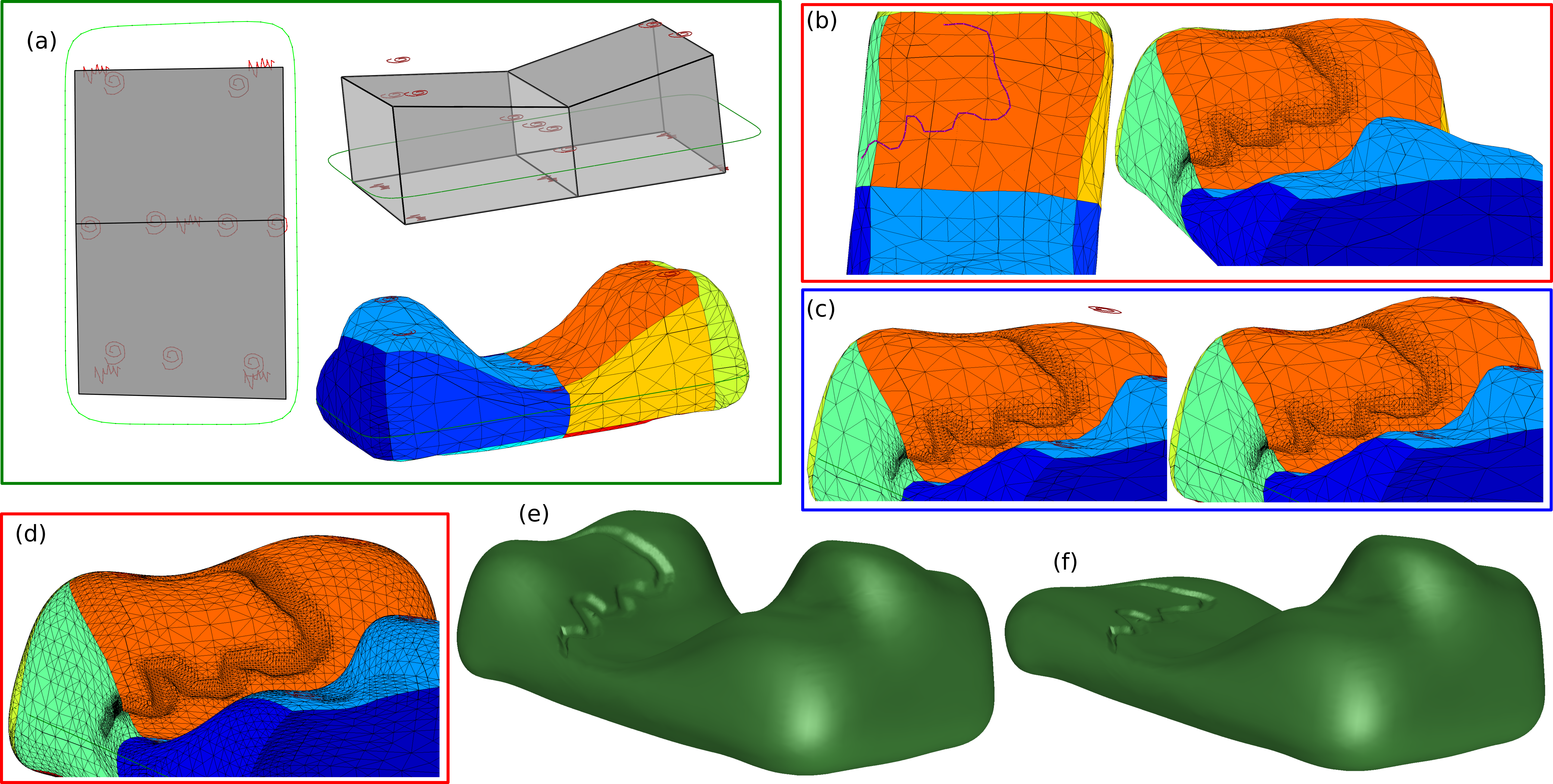} 
    \caption[Steps to model a terrain.]{Steps to model a terrain.} 
    \label{fig:composedRep:terrainResult} 
  \end{figure*} 

  \begin{figure*}[ht!] 
    \centering 
    \includegraphics[width=1\textwidth]{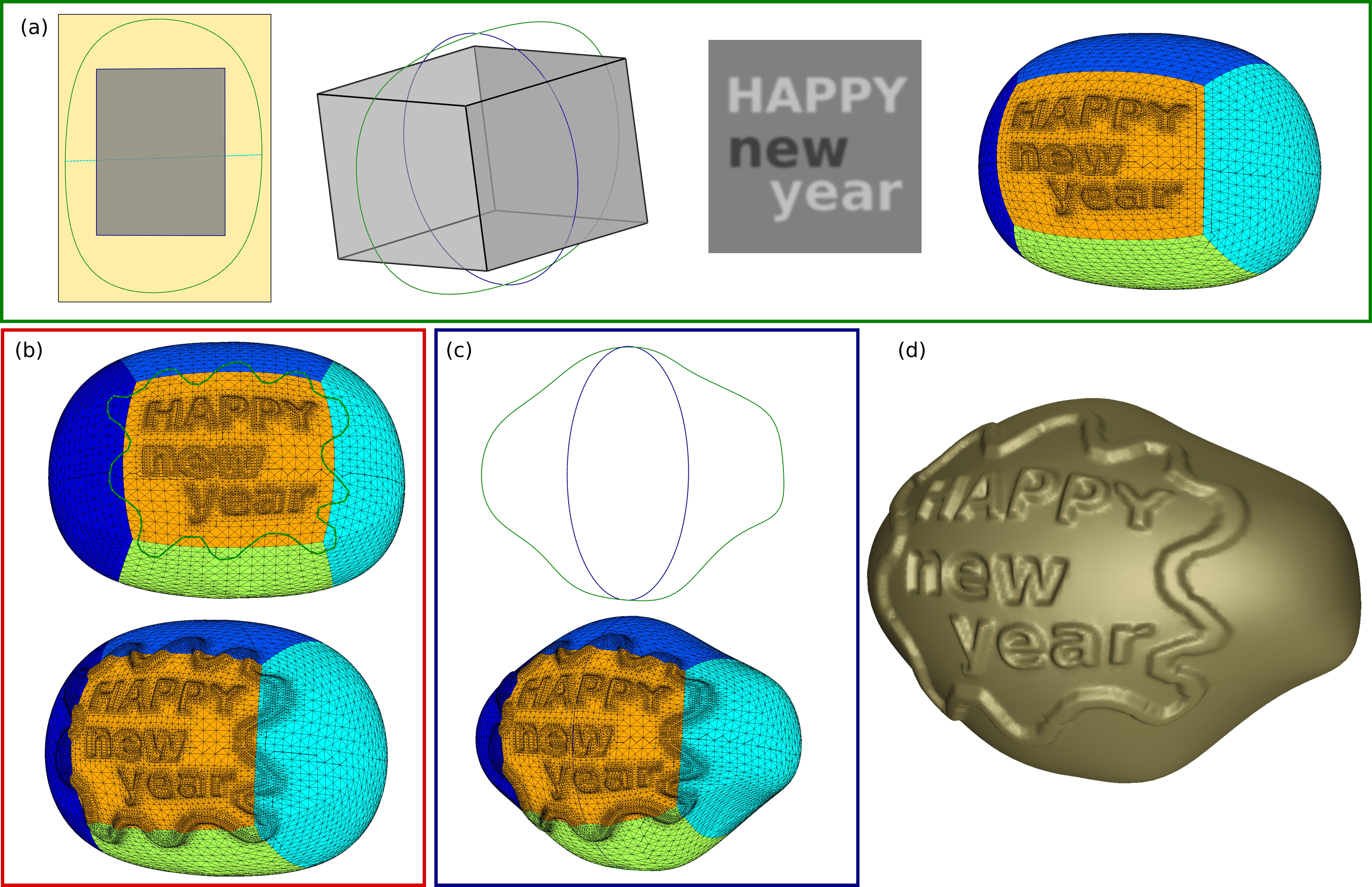} 
    \caption[Steps to model a party balloon.]{Steps to model a party balloon.} 
    \label{fig:composedRep:visgrafResult} 
  \end{figure*}

\section{Conclusion and Future Work}\label{sec:composedRep:conclusion} 
  We developed a Sketch-based Surface Modeling system using a pipeline based on splitting the object between base and detail representations using the semantics of function composition and binary operations, ultimately allowing for control of model editing in two scales: global and local. 
  The proposed pipeline has four main elements: implicit surface, base mesh, atlas and 4-8 mesh.
  Since we desired to investigate whether that representation is powerful enough to build the SBM system, for each pipeline element we either provide an off-the-shelf solution or we create simple approaches for each problem, allowing us to model different shapes controlling the local and global changes. 
  The advantages of these simple approaches are: making sure that the representation is doable and powerful for SBIM problems, and finding avenues for further research.  
 
  One important example of a problem that demands further research is the base mesh. 
  We implement a semi-automatic approach in which the user places the vertices to approximate the geometry and topology, followed by the base mesh creation in the space. 
  This approach achieves good results, however, we only can work in a single plane.  
  Since the base mesh the responsible for the topology of the final model, we are restricted to models which topology can be handled in one plane.  
  Thereupon, we plan to explore two approaches for the base mesh problem. 
  Firstly, we intend to transport the actual semi-automatic solution to 3D, letting the user handle boxes directly in space. The main challenge of this approach is developing an interface in which the user can work efficiently and effectively in creating the base mesh.
  The other approach is using a mesh simplification, as for instance in the method presented by Daniels et al.~\cite{daniels08}. Although this approach is automatic, it starts with a dense mesh; we should then exchange the problem of how to find a base mesh for the problem of creating a mesh with the correct topology. 

  Concerning the Atlas, we aim to develop mathematical and computational tools  to handle the scale of the atlas as well as an interface to control predefined height-maps, and also algorithms that split the atlas if it has a high level of deformation in comparison to the surface. 

  Finally, we want to apply this SBM pipeline for specific domains, since 
  we believe that the potential of our representation and pipeline will be better exploited this way (e.g., figure  modeling or geological modeling). 

\section*{Acknowledgment}
  We would like to thank our colleagues for their useful discussions and advice, in particular to Nicole Sultanum and Ronan Amorin. 
  This research was supported in part by the NSERC / Alberta Innovates Academy (AITF) / Foundation CMG Industrial Research Chair program in Scalable Reservoir Visualization, and grants from the Brazilian funding agencies CNPq and CAPES/PDEE.

\bibliographystyle{IEEEtran}
\bibliography{ARXIV_DASS} 
\end{document}